\newtheorem{theorem}{Theorem}
\newtheorem*{theorem*}{Theorem}
\newtheorem{definition}{Definition}
\newcommand{\ket}[1]{\vert#1\rangle}
\newcommand{\ketbra}[2]{\vert #1 \rangle \langle #2 \vert}
\newcommand{\tr}{\text{\normalfont Tr}}
\newcommand{\beq}{\begin{equation}}
\newcommand{\eeq}{\end{equation}}
\newcommand{\LL}{\text{Herm}}
\newcommand{\MM}{\mathbb{M}}
\newcommand{\cH}{\mathcal{H}}
\newcommand{\cS}{\mathcal{S}}
\newcommand{\cM}{\mathcal{M}}
\newcommand{\bM}{\textbf{M}}
\newcommand{\bN}{\textbf{N}}
\newcommand{\II}{\mathbb{I}}
\newcommand{\CC}{\mathbb{C}}
\begin{document}

\title{Distributed sampling, quantum communication witnesses, and measurement incompatibility}

\author{Leonardo Guerini}
\email{guerini.leonardo@ictp-saifr.org}
\affiliation{International Centre for Theoretical Physics - South American Institute for Fundamental Research \& Instituto de F\'isica Te\'orica - UNESP, R. Dr. Bento Teobaldo Ferraz 271, S\~ao Paulo, Brazil}

\author{Marco T\'{u}lio Quintino}
\affiliation{Department of Physics, Graduate School of Science, The University of Tokyo, Hongo 7-3-1, Bunkyo-ku, Tokyo 113-0033, Japan}

\author{Leandro Aolita}
\affiliation{Instituto de F\'isica, Universidade Federal do Rio de Janeiro, P. O. Box 68528, Rio de Janeiro, RJ 21941-972, Brazil}

\begin{abstract}
We study prepare-and-measure experiments where the sender (Alice) receives trusted quantum inputs but has an untrusted state-preparation device and the receiver (Bob) has a fully-untrusted measurement device.
A distributed-sampling task naturally arises in such scenario, where the goal is for Alice and Bob to reproduce the statistics of his measurements on her quantum inputs using a fixed communication channel.  
Their performance at such task can certify quantum communication (QC), and this is formalised by measurement-device-independent QC witnesses. 
Furthermore, we prove that QC can provide an advantage (over classical communication) for distributed sampling if and only if Bob's measurements are incompatible. 
This gives a novel operational interpretation to measurement incompatibility, and motivates a generalised notion of it related to a subset of quantum states. 
Our findings have both fundamental and applied implications.
\end{abstract}
\maketitle

The prepare-and-measure (PM) scenario is an ubiquitous framework to investigate several foundational and communicational problems.
There, one has two distant parties, Alice and Bob, and a referee, who sends them classical random inputs.
Accordingly, Alice prepares a physical system, encoding a (classical or quantum) message that she sends to Bob. Bob then makes a measurement on the system and returns his outcome to the referee for final analysis.
Depending on whether the message is classical or quantum, this framework provides a natural mindset for, \textit{e.g.}, classical and quantum dimension witnesses \cite{wehner2008, gallego2010}, quantum key distribution \cite{pawlowski2011}, classical and quantum random access codes \cite{ambainis1999, pawlowski2009}, and self-testing \cite{tavakoli2018, farkas2019}.
All these tasks have been extensively studied in the so-called device-independent (DI) paradigm, where both Alice's state-preparation and Bob's measurement stations are given by untrusted apparatuses effectively treated as black-box devices (the dimension of the communication channel is sometimes assumed, though). 
This implies that both devices admit only classical inputs and that Bob's device generates only classical outputs.

Alternatively, partially DI paradigms have also proven to yield extremely fruitful research lines. 
These consist of settings where the devices have both trusted (\textit{i.e.} well-characterised and with full quantum control) and untrusted components.
Notable instances thereof are the phenomena of Einstein-Podolsky-Rosen (EPR) steering \cite{wiseman2007}, semi-quantum instrumental causal networks \cite{nery2018}, nonlocal correlations with quantum inputs \cite{buscemi2012} (which can be interpreted as measurement-device-independent entanglement certification \cite{branciard2010}), and certification of quantum memories \cite{rosset2018}. 
These studies have revealed interesting aspects of quantum theory that could not be properly addressed in the DI regime.

Here, we study the PM scenario with quantum inputs for Alice, which we call \emph{semi-quantum prepare-and-measure} (SQPM).
More precisely, we consider a hybrid device for Alice, that admits trusted quantum-state preparations as inputs but is measurement-DI, and a fully-DI black-box device for Bob (see Fig. \ref{fig:scenario}). 
\begin{figure}[b!]
\includegraphics[width=.7\linewidth]{SQPaM2.pdf}
\caption{The semi-quantum prepare-and-measure scenario: A referee sends a random state $\rho_x$ to Alice (unknown to her) and a random label $y$ to Bob. 
Alice then sends a $\rho_x$-dependent message to Bob, who makes a $y$-dependent measurement on it and returns the outcome $b$ to the referee.
}
  \label{fig:scenario}
\end{figure}
In contrast to the usual PM scenario with classical inputs, in SQPM not all well-defined statistics admit a physical realisation. 
Our first contribution is thus to characterise the set of SQPM statistics that arise from quantum experiments. 
Then, we introduce a distributed-sampling (DS) problem where the goal is for Bob to simulate the outputs (\textit{i.e.} sample from the outcome distribution) of measurements associated to his inputs
on Alice's quantum states, using as little communication as possible. 
This is an information-theoretic task that can be used to certify quantum communication from Alice to Bob in the SQPM scenario. 
We formalise this through the notion of measurement-DI \emph{quantum communication witnesses}, which can be efficiently obtained by means of semi-definite programmes (SDPs).
Furthermore, DS also turns out to be intimately connected to the fundamental problem of quantum measurement incompatibility \cite{heinosaari2016}: We prove an equivalence between the quantum-communication advantage for DS (over classical communication) and the incompatibility of the measurements implemented by Bob's black box. 
This provides a precise operational interpretation of measurement incompatibility, and naturally leads to a generalised definition of compatibility relative to the input states in DS.

\textit{Preliminaries.---} 
Let $\mathcal{H}$ be a $d$-dimensional complex Hilbert space and $\LL(\cH)$ the set of linear operators acting on $\cH$.
Let $[n]=\{1,\ldots,n\}$.
The states of a quantum system associated to $\cH$ are given by linear operators $\rho\in\LL(\cH)$ that are positive semidefinite and have unit trace, $\rho\geq0,\ \tr(\rho)=1$.
The quantum measurements with $o$ outcomes on this system are described by collections $\bM_y=\{M_{b|y}\}_{b\in[o]}\subset\LL(\cH)$ of positive semidefinite operators acting on $\cH$ that sum up to the identity, $M_{b|y}\geq0,\ \sum_b M_{b|y} = \II$.
We denote by $\mathbb{S}(\cH)$ and $\MM(\cH)$ the sets of all quantum states and measurements (with any number of outcomes) on $\cH$, respectively.

A set of $m$ quantum measurements $\cM = \{\bM_{1},\ldots,\bM_{m}\}$ with $o$ outcomes is said to be \textit{compatible}, or \textit{jointly measurable} \cite{heinosaari2016}, if there exists a so-called mother measurement $\bN=\{N_a\}_{a\in[n]}$ and response functions $f(\cdot|y, a):[o]\rightarrow[0,1]$, with $f(b|y,a)\geq0$ and $\sum_b f(b|y,a)=1$ for all $(b,y,a)$, such that
\begin{equation}\label{eq:jm}
M_{b|y} = \sum_{a=1}^{n} N_a f(b|y,a),
\end{equation}
for all $y\in[m]$ and $b\in[o]$.
This expresses the fact that one can perform $\bN$ and, depending on $y$ and the mother measurement's outcome $a$ obtained, sample $b$ from $f(\cdot|y,a)$ to determine an outcome for $\bM_{y}$.
Denoting the (convex) set of compatible measurements by \textbf{COMP}, we define the \textit{generalised robustness of incompatibility} of $\cM$ by
\begin{align}\nonumber
R_I(\cM)=\min\{\eta;\ \{&(1-\eta)M_{b|y}+\eta Q_{b|y}\}_{b,y}\in\textbf{COMP},\\ &\forall y\in[m]\ \{Q_{b|y}\}_b\in\MM(\cH)\},
\end{align}
\textit{i.e.} the minimum amount of noise (represented by an arbitrary measurement $\textbf{Q}$) needed to turn the combined measurement compatible.

A quantum channel is a completely positive trace-preserving linear map $\Lambda: \mathbb{L}(\cH)\to \mathbb{L}(\cH)$, forming a set denoted by \textbf{CPTP}.
We say that $\Lambda$ is \textit{non-steering-breaking} if its adjoint $\Lambda^\dagger$ is \textit{incompatibility-breaking}, i.e. $\{\Lambda^\dagger(M_{b|y})\}_{b,y}\in\textbf{COMP}$ for all sets of measurements $\{M_{b|y}\}_{b,y}$ \cite{heinosaari2015}.
This follows from the fact that a set of measurements is incompatible if and only if it is useful for demonstrating EPR-steering \cite{quintino2014, uola2014}.
The \textit{generalised robustness of non-steering-breaking} of $\Lambda$ is
\begin{align}\nonumber
R_{NSB}(\Lambda)=&\min\{\eta;\ \Gamma \in \textbf{CPTP}, \forall \{M_{b|y}\}_{b,y}\subset\MM(\cH),\\
&\{[(1-\eta)\Lambda+\eta\Gamma]^\dagger(M_{b|y})\}\in\textbf{COMP} \}.
\end{align}

\emph{The semi-quantum prepare-and-measure (SQPM) scenario.---} 
Consider the scenario where a sender, Alice, receives a quantum input $\rho_x$, and a receiver, Bob, is given a classical input $y$ (see Fig. \ref{fig:scenario}). 
We denote by $\cS=\{\rho_x\}_x$ the set of quantum inputs for Alice. 
Alice's and Bob's inputs are randomly chosen from $\cS$ and $[m]$, respectively, by a referee.
Alice then prepares a (potentially quantum) message by implementing some (uncharacterised) operation on $\rho_x$ and sends it to Bob, who extracts a classical output $b$ from it through some (uncharacterised) measurement that may depend on $y$.
Such experiment is described by a state-conditioned behaviour $\{(P(b|x,y),\rho_x)\}_{b,x,y}$, where $P(b|x,y)$ represents the conditional probability of $b$ given $x$ (the classical label of Alice's quantum input) and $y$. 
The state-conditioned behaviour thus encapsulates the conditional probabilities of Bob's outcomes in explicit correspondence with the states $\rho_x$ of Alice's inputs. 
From now on, we use the short-hand notation $\{P(b|\rho_x,y)\}_{b,x,y}$ for state-conditioned behaviours and refer to them simply as behaviours.

The standard prepare-and-measure scenario is recovered in the case where the states in $\cS$ can be perfectly discriminated.  
Each choice $\cS$ of trusted quantum states originates a different instance of the scenario, which is completely defined by the triple $(\cS,m,o)$, where $m$ and $o$ fix the range of values for the labels $y$ and $b$, respectively.
Thus, the standard probability constraints $P(b|\rho_x,y)\geq0,\ \sum_b P(b|\rho_x,y)=1$, for all $b\in[o],\rho_x\in\cS,$ and $y\in[m]$, define the polytope of behaviours from this scenario, whose extremal points are the $o^{m|\cS|}$ deterministic behaviours (see Fig. \ref{fig:polytope}).

\begin{figure}
  \centering
  \includegraphics[width=.6\linewidth]{polytope.pdf}
  \caption{General behaviours from a semi-quantum prepare-and-measure scenario (SQPM) admit a polytope characterisation and are illustrated by the external pentagon. The subset of quantum behaviours (Q) can be characterised by post-quantum behaviour witnesses ($W_{PQ}$) (see App. A \cite{supp}) and the subset of behaviours that can be generated with classical communication (CC) can be characterised by quantum communication witnesses ($W_{QC}$).}
  \label{fig:polytope}
\end{figure}

The SQPM scenario is measurement-device-independent by definition.
Given $P(b|\rho_x,y)$, the conditional on the quantum state $\rho_x$ and on the classical label $y$ can be completely arbitrary; in particular we do not assume the behaviour can be obtained from quantum measurements performed on $\rho_x$.
Whenever this is the case, we say that the behaviour admits a {quantum realisation}.
\begin{definition}
A behaviour $\{P(b|\rho_x,y)\}_{b,x,y}$ \emph{admits a quantum realisation} if there exists a set of $o$-outcome measurements $\{\bM_y;\ y\in[m]\}\subset\MM(\cH)$ such that 
\begin{equation}
P(b|\rho_x,y) = \tr(\rho_xM_{b|y}),
\end{equation}
for any $b\in[o],\rho_x\in\cS,y\in[m]$. 
For simplicity, we refer to these as quantum behaviours, and denote by \textbf{Q} the set formed by them.
\end{definition}

In contrast with the prepare-in-measure scenario with classical inputs, in the SQPM not all behaviours are quantum; some statistics are incompatible with the given trusted quantum states (see App. A in the Supplemental Material \cite{supp}).
However, deciding whether there exists a quantum realisation for a given behaviour can be done efficiently by means of standard semidefinite programming (SDP), as shown in App. A.

\textit{Distributed sampling and quantum communication witnesses.---}
We now focus on quantum behaviours $\{P(b|\rho_x, y)=\tr(\rho_xM_{b|y})\}_{b,x,y}$. 
We define a {distributed sampling} task by the following rules: \begin{enumerate}
\item a referee announces a set of states $\cS$ and a set of $m$ $o$-outcome measurements $\cM$;
\item the referee sends Alice a single copy of a randomly chosen $\rho_x\in\cS$, and sends Bob a randomly chosen classical label $y\in[m]$;
\item Alice applies an arbitrary quantum operation on $\rho_x$, producing a message that is sent to Bob; 
\item conditioned on $y$ and on the message, Bob generates an output $b\in[o]$ and sends it to the referee;
\item Alice and Bob are successful if the conditional probability distributions observed by the referee after many rounds match the behaviour $\{\tr(\rho_xM_{b|y})\}_{b,x,y}$.
\end{enumerate}
Notice that $\cS$ is broadcast, but Alice does not know the particular $\rho_x$ (\textit{i.e.} the value of $x$) sent to her in each run.
The essence of this task already appears in \cite{montanaro2019}, where the inputs are not previously announced nor restricted to limited sets, two-way communication is allowed, and the main interest lies in classical communication complexity.
These differences allow us to focus on quantum properties of the involved objects.

This task is trivial if Alice and Bob have access to a perfect quantum-communication channel, namely the identity channel.
In this case, Alice can simply send $\rho_x$ to Bob, who will then hold both inputs and can implement $\bM_{y}$, reproducing the statistics accurately.
Refs. \cite{stark2016, bluhm2018} investigate the task of quantum compression, which can be interpreted as distributed sampling with perfect but lower-dimensional communication channels.

In contrast, consider now that Alice can only send classical messages to Bob. 
In this case, her most general strategy is to perform a quantum measurement $\bN=\{N_{a}\}$ on the state $\rho_x$ received by the referee and send the outcome $a$ of her measurement to Bob. 
He then outputs a classical message $b$ according to some response function, which may depend on the outcome $a$ sent by Alice and the classical input $y$ received from the referee.

\begin{definition}
A behaviour $\{P(b|\rho_x,y)\}_{b,x,y}$ \emph{admits a distributed sampling realisation with classical communication} ({CC-realisation}, for short) if there exists a quantum measurement $\bN=\{N_a\}^n_{a=1}\in\MM(\cH)$ and response functions $\{f(\cdot|y,a)\}_{y,a}$ such that 
\begin{equation}\label{eq:cc-realisation}
P(b|\rho_x,y)=\sum_{a=1}^{n} \tr (\rho_x N_a) f(b|y,a)
\end{equation}
for any $b\in[o],\rho_x\in\cS,y\in[m]$.
\end{definition}

Characterising the set $\textbf{CC}$ of behaviours that admit a CC-realisation can be done by means of an SDP. 
Moreover, such SDP can quantify how far a given behaviour $P=\{P(b|\rho_x, y)\}_{b,x,y}$ is from being CC-realisable by calculating
\begin{align}\label{gen_rob_behaviour}\nonumber
R_{NCC}(P)=\min\{\eta;\ &(1-\eta)P + \eta q\in \textbf{CC},\\ &q = \{q(b|\rho_x, y)\}_{b,x,y}\in\textbf{Q}\}.
\end{align}
\noindent We call this quantity the \textit{generalised robustness of non-CC-realisability} of the behaviour.
Also, since \textbf{CC} is convex and compact, we can describe its border by means of witnesses.
\begin{definition}
A \emph{quantum communication witness} is a pair $W_{QC}=(\{\mu_{bxy}\}_{b,x,y},\beta)$, with $\beta,\mu_{bxy}\in\mathbb{R}$, such that
\begin{equation}\label{wit:qc}
\sum_{b,x,y}\mu_{bxy}P(b|\rho_x,y) {\geq} \beta,
\end{equation}
is satisfied by all CC-realisable behaviours, but violated by some behaviour, in the scenario with trusted states $\cS=\{\rho_x\}_x$.
\end{definition}
\noindent Therefore, the violation of (\ref{wit:qc}) is a measurement-device-independent way of certifying that Alice and Bob share a quantum communication channel.
\begin{theorem}\label{thm:qc}
Let $\cS\subset\mathbb{S}(\cH)$ be a finite set of states and $o,m$ be positive integers.
Then any behaviour $P=\{P(b|\rho_x,y);\ \rho_x\in\cS,\ b\in[o],\ y\in[m]\}\notin \textbf{CC}$ violates some quantum communication witness.
Moreover, the maximal violation over all witnesses provides exactly the non-CC-realisability generalised robustness of this behaviour,
\begin{equation}
R_{NCC}(P) = \max_{W_{QC}}\sum_{b,x,y}\mu_{bxy}P(b|\rho_x,y) - \beta.
\end{equation}
\end{theorem}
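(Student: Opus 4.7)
The plan is to treat both claims through convex-duality arguments, leveraging the fact that $\textbf{CC}$ is a closed convex subset of the finite-dimensional real vector space of behaviours. Convexity of $\textbf{CC}$ follows directly from Eq.~\eqref{eq:cc-realisation} by taking convex combinations of the mother measurement $\bN$ and the response functions $f(\cdot|y,a)$; closedness/compactness is inherited from the compactness of $\MM(\cH)$ once the number of mother outcomes $n$ is bounded via a Caratheodory argument in the finite-dimensional space of behaviours.

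For the existence statement, given $P \notin \textbf{CC}$ the Hahn--Banach/hyperplane separation theorem guarantees the existence of coefficients $\{\mu_{bxy}\} \subset \mathbb{R}$ and $\beta \in \mathbb{R}$ such that $\sum_{b,x,y} \mu_{bxy} P'(b|\rho_x,y) \geq \beta$ for every $P' \in \textbf{CC}$, while $\sum_{b,x,y} \mu_{bxy} P(b|\rho_x,y) < \beta$. By definition, this pair is a quantum communication witness violated by $P$.

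For the quantitative part, I would express $R_{NCC}(P)$ as a semidefinite programme and compute its Lagrangian dual. After the change of variables $s = 1/(1-\eta)$, the feasibility condition $(1-\eta)P + \eta q = p_{CC}$ becomes the conic identity $P + (s-1)q = s\,p_{CC}$ with $s \geq 1$, and both constraints $q \in \textbf{Q}$ and $p_{CC} \in \textbf{CC}$ become SDP-representable (via the quantum characterisation of App.~A and the joint-measurability condition in Eq.~\eqref{eq:jm}). The dual assigns Lagrange multipliers $\mu_{bxy}$ to the equality constraints and a multiplier $\beta$ to the normalisation; a routine manipulation shows that the dual objective evaluates to $\sum_{b,x,y} \mu_{bxy} P(b|\rho_x,y) - \beta$, while the dual constraints collapse precisely to the witness conditions on $\textbf{CC}$ together with an appropriate normalisation against $\textbf{Q}$.

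The hard part will be establishing strong duality, which I would address by verifying Slater's condition. Taking $q$ to be the behaviour produced by measurements proportional to the identity (so that the outcome distribution is uniform and independent of $\rho_x$ and $y$) yields a point in the relative interior of both $\textbf{Q}$ and $\textbf{CC}$, hence a strictly feasible primal point. Strong duality then equates the primal optimum $R_{NCC}(P)$ with the dual optimum, which is by construction the maximum witness violation, completing the proof.
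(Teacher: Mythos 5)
Your overall route coincides with the paper's: hyperplane separation of the compact convex set $\textbf{CC}$ for the existence claim, and a generalised-robustness SDP whose Lagrangian dual is the witness optimisation, with strong duality via Slater's condition, for the quantitative claim. The gap is precisely in the step you yourself flag as the hard part. Strict feasibility of the primal is \emph{not} implied by the fact that the uniform behaviour $q$ lies in the relative interior of $\textbf{Q}$ and $\textbf{CC}$: in your conic form $P+(s-1)q=s\,p_{CC}$ you must exhibit a \emph{finite} $s$ (equivalently $\eta<1$) for which the specific point $p_{CC}=\tfrac1s P+(1-\tfrac1s)q$ admits a CC-realisation with a strictly positive mother measurement $\bN$. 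For a quantum-realisable $P=\{\tr(\rho_x M_{b|y})\}$ this is the nontrivial statement that sufficiently depolarised measurements become jointly measurable strictly before full depolarisation, and with a full-rank mother; the paper does not get this for free either, but imports it from the joint-measurability/EPR-steering connection (depolarised measurements are compatible at some $\eta<1$) and then perturbs the mother measurement to remove possible zero eigenvalues. Your proposal simply asserts "hence a strictly feasible primal point", so the argument as written does not establish Slater's condition.

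A second, related omission: the theorem covers arbitrary behaviours, but your Slater construction cannot work for behaviours outside $\textbf{Q}$. If the states in $\cS$ are linearly dependent, every quantum behaviour satisfies exact affine constraints (cf.\ the example in App.~A), and mixing $P$ with \emph{any} $q\in\textbf{Q}$ at $\eta<1$ cannot repair a violation of such a constraint; hence no finite $s$ yields feasibility, and your change of variables $s=1/(1-\eta)$ even excludes the trivial $\eta=1$ point. The paper handles this by restricting the SDP/duality argument to quantum-realisable behaviours (non-quantum behaviours are never CC-realisable), leaving the bare existence of a violated witness for general $P\notin\textbf{CC}$ to the separation argument, which your first paragraph does cover correctly. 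To complete your proof you should make the same restriction explicit and supply the depolarisation-plus-perturbation argument (or an equivalent interior-point construction) for the Slater step.
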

\noindent In App. B \cite{supp} we provide the proof of Thm. \ref{thm:qc} and details on the SDP approach.
As an application, we study the advantage for distributed sampling of a paradigmatic noisy quantum channel over classical ones.
The results are graphically summarised in Fig. \ref{fig:depolarising}.

\begin{figure}[h!]
\centering
\includegraphics[width=1\linewidth]{channel3.pdf}
\caption{Robustness of properties of behaviours, set of measurements, and quantum channel involved in the distributed sampling with the depolarising qubit channel $D_\eta: A\mapsto (1-\eta)A +\eta\tr(A)\II/2$ of noise $0\leq \eta\leq 1$. 
Consider the sets of states $\cS_1=\{\ketbra00, \ketbra++\}, \cS_2 = \{\ketbra00, \ketbra++,\ketbra rr\},$ and $\cS_3 = \{\ketbra00, \ketbra++,\ketbra rr, \II/2\}$, where $\ket+, \ket r$, and $\ket0$ are respectively the positive-eigenvalue eigenstates of the Pauli matrices $\sigma_x$, $\sigma_y$, and $\sigma_z$, and the measurement set $\cM\equiv \{\sigma_x, \sigma_y, \sigma_z\}$. 
The figure shows the robustness of the non-CC-realisability of the resulting behaviours $P_1, P_2$, and $P_3$, i.e. the critical noise at which the behaviour admits a distributed sampling realisation with classical communication. 
Since $D_\eta(\cM)$ is compatible for $\eta\geq1-1/\sqrt3 \approx 0.4226$, for this range $D_\eta$ is replaceable by a classical channel regardless of the input-states.
In turn, for $\eta\geq1/2$, 
the adjoint channel breaks the incompatibility of any set of projective measurements, and for $\eta\geq 2/3$ the channel is entanglement-breaking. 
More details in App. E \cite{supp}.}
  \label{fig:depolarising}
\end{figure}

\textit{Measurement incompatibility.---}
Next, we show that measurement compatibility is a "classical" property that matches precisely the classical communication case of distributed sampling.
This implies 
that any distributedly sampled quantum behaviour can be used to estimate both the degree of incompatibility of the implemented measurements and the degree of non-steering-breaking of the utilised channel.

\begin{theorem} \label{thm:jm}
A set of measurements $\cM\subset\MM(\cH)$ is compatible if and only if the behaviour $\{\tr (\rho_x M_{b|y});\ \rho_x\in\cS,\ \bM_y\in\cM\}$ admits a distributed sampling realisation with classical communication, for any set of states $\cS\subset\mathbb{S}(\cH)$ that spans $\LL(\cH)$.
Morevover, for any distributedly sampled quantum behaviour $P=\{P(b|\rho_x, y)=\tr(\widetilde{\Lambda}(\rho_x)\widetilde{M}_{b|y})\}\in\textbf{Q}$, we have 
\begin{equation}\label{bounds}
R_{NCC}(P) \leq R_{I}(\widetilde{\cM})\ \text{and}\ R_{NCC}(P) \leq R_{NSB}(\widetilde{\Lambda}),
\end{equation} 
where $\widetilde{\cM}$ and $\widetilde{\Lambda}$ are the uncharacterised measurements and communication channel, respectively, used in the sampling of $P$.
The equality holds in the first case if $\cS$ spans $\LL(\cH)$ and in the second case if, besides that, the measurements $\{\widetilde{\Lambda}^\dagger(\widetilde{M}_{b|y})\}_{b,y}$ present the greatest generalised robustness of incompatibility in its dimension.

\end{theorem}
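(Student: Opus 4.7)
The plan is to split the theorem into three parts: the iff characterisation of compatibility via CC-realisability, the two upper bounds, and the equality conditions. The iff does most of the conceptual work, and the bounds and equalities then follow by plugging optimal decompositions into Part 1. For the forward implication of the iff, a compatibility decomposition $\widetilde M_{b|y} = \sum_a N_a f(b|y,a)$ as in Eq.~(\ref{eq:jm}) immediately yields $\tr(\rho_x \widetilde M_{b|y}) = \sum_a \tr(\rho_x N_a) f(b|y,a)$ by linearity of the trace, matching the CC-realisation form (\ref{eq:cc-realisation}). The converse uses the spanning hypothesis: the pointwise equality $\tr(\rho_x \widetilde M_{b|y}) = \tr(\rho_x \sum_a N_a f(b|y,a))$ for every $\rho_x\in\cS$ upgrades to the operator identity $\widetilde M_{b|y} = \sum_a N_a f(b|y,a)$ once the $\rho_x$'s span $\LL(\cH)$, giving compatibility.

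For the two bounds I would reduce both to the forward direction above. For $R_{NCC}(P)\leq R_I(\widetilde\cM)$, take the optimal noise measurement $\bQ$ in the robustness problem, form $q(b|\rho_x,y) = \tr(\widetilde\Lambda(\rho_x) Q_{b|y})$ (a valid quantum behaviour since $\widetilde\Lambda^\dagger(Q_{b|y})$ is a POVM), and observe that $(1-\eta)P + \eta q$ corresponds to the compatible mixed measurement, hence admits a CC-realisation by Part 1. For $R_{NCC}(P) \leq R_{NSB}(\widetilde\Lambda)$, take the optimal $\Gamma$, set $q'(b|\rho_x,y) = \tr(\Gamma(\rho_x)\widetilde M_{b|y})$, and note that the noisy channel $(1-\eta)\widetilde\Lambda + \eta\Gamma$ has an incompatibility-breaking adjoint, so $(1-\eta)P + \eta q'$ again lands in the CC-realisable set by Part 1.

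For the equality conditions I would invert these constructions. Given $\eta^* = R_{NCC}(P)$ and an optimal $q\in\bQ$ with operators $X_{b|y}$, stripping the states via spanning of $\cS$ converts the CC-realisation of $(1-\eta^*)P + \eta^* q$ into the operator identity $(1-\eta^*)\widetilde\Lambda^\dagger(\widetilde M_{b|y}) + \eta^* X_{b|y} = \sum_a N_a f(b|y,a)$, which is a compatibility decomposition of the effective measurements and supplies the matching lower bound for the first equality. The main obstacle lies in the second equality: the noise operators $X_{b|y}$ must be recognisable as $\Gamma^\dagger(\widetilde M_{b|y})$ for some \emph{CPTP} map $\Gamma$ whose mixture with $\widetilde\Lambda$ breaks steering for \emph{every} measurement set, not just $\widetilde\cM$. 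This is precisely where the extremality hypothesis---that $\{\widetilde\Lambda^\dagger(\widetilde M_{b|y})\}$ achieves the maximal incompatibility robustness in its dimension---earns its keep, by making $\widetilde\cM$ a worst case for the channel, so that a noise channel tuned to $\widetilde\cM$ automatically handles all other measurement sets and thereby saturates $R_{NSB}(\widetilde\Lambda)$.
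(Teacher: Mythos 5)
Your proposal follows essentially the same route as the paper's proof: the forward direction transfers a compatibility decomposition into a CC-realisation via the mother measurement, the converse uses the spanning of $\cS$ to lift the statistical identities to the operator identity $\widetilde M_{b|y}=\sum_a N_a f(b|y,a)$, the two robustness bounds follow by mixing the behaviour with noise built from the optimal $\bQ$ (respectively the optimal noise channel $\Gamma$) and invoking the first part, and the equality conditions are argued exactly as in the paper -- spanning for the first, and the maximal-incompatibility hypothesis for the second, which you (like the paper) leave at the level of a sketch. So the proposal is correct to the same standard as the published argument, with your treatment of the bounds being, if anything, slightly more explicit about the noise constructions.
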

\noindent The proof is presented in App. C \cite{supp}.

Theorem \ref{thm:jm} provides a novel operational interpretation for joint measurability in terms of a communicational task.
The first inequality in (\ref{bounds}) shows that any incompatible set of measurements generates some behaviour that can certify quantum communication via distributed sampling.
Similarly, the second inequality in (\ref{bounds}) implies that a channel $\Lambda$ is useless to certify quantum communication in this scenario if and only if $\Lambda^\dagger$ is incompatibility-breaking (\textit{i.e.} $\Lambda$ is steering-breaking).

A consequence of Thm. \ref{thm:jm} is that a certification of quantum communication via distributed sampling also detects the incompatibility of the implemented measurements.
Hence, quantum communication witnesses form a particular class of measurement incompatibility witnesses \cite{carmeli2018}.
In general, the latter are defined by a set of Hermitian operators $\{F_{by}\}_{b,y}$ acting on the same space as the measurements and a scalar $\gamma$ such that condition 
\begin{equation}\label{comp-wit}
\sum_{b,y}\tr(F_{by}M_{b|y})\geq \gamma
\end{equation}
is satisfied for any compatible set $\cM=\{M_{b|y}\}_{b,y}$, but violated by some incompatible set.
Our next result shows that, conversely, every measurement incompatibility witness also detects the exchange of quantum communication in the appropriate distributed-sampling context.

\begin{theorem}\label{thm:alljmwit}
For any measurement incompatibility witness $W_{MI}=(\{F_{by}\}, \gamma)$ there exists a set of states $\cS=\{\rho_x\}_x$ and a quantum communication witness $W_{QC}=(\{\mu_{bxy}\},\beta)$ that detects the incompatibility of the same sets of measurements as $W_{MI}$.
\end{theorem}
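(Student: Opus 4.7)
The plan is to exploit the duality between measurements and behaviours that arises when the trusted states form a linear basis of $\LL(\cH)$, combined with the easy direction of Thm.~\ref{thm:jm} (compatible measurements always yield CC-realisable behaviours). The essential idea is that fixing a tomographically complete set of states turns any linear functional on measurements into a linear functional on behaviours with the same numerical value.

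Concretely, I would first choose $\cS=\{\rho_x\}_{x\in[d^2]}$ to be a set of density operators that is a basis of $\LL(\cH)$. Such a basis always exists in finite dimension: take any Hermitian basis of $\LL(\cH)$ and shift/rescale each element by a small multiple of $\II/d$ so that it becomes a bona-fide state while preserving linear independence. Because $\{\rho_x\}$ spans, every Hermitian $F_{by}$ has a unique expansion $F_{by}=\sum_x \mu_{bxy}\,\rho_x$ with real coefficients $\mu_{bxy}\in\mathbb{R}$. Define the candidate witness as $W_{QC}=(\{\mu_{bxy}\}_{b,x,y},\beta)$ with $\beta=\gamma$.

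The key computation is then immediate: for any set of measurements $\cM=\{M_{b|y}\}$ and the associated quantum behaviour $P(b|\rho_x,y)=\tr(\rho_x M_{b|y})$, linearity of the trace yields
\begin{equation*}
\sum_{b,x,y}\mu_{bxy}\,P(b|\rho_x,y)=\sum_{b,y}\tr\!\Big(\sum_x\mu_{bxy}\rho_x\,M_{b|y}\Big)=\sum_{b,y}\tr(F_{by}M_{b|y}).
\end{equation*}
Thus the QC functional on $P$ coincides exactly with the MI functional on $\cM$.

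To conclude, I would observe that every CC-realisable behaviour can be rewritten as $P(b|\rho_x,y)=\tr(\rho_x M_{b|y})$ with $M_{b|y}=\sum_a N_a f(b|y,a)$ forming a compatible set, so by the above identity it satisfies $\sum_{bxy}\mu_{bxy}P(b|\rho_x,y)\geq\gamma=\beta$; moreover, any $\cM$ violating $W_{MI}$ produces, via $P(b|\rho_x,y)=\tr(\rho_x M_{b|y})$, a behaviour violating $W_{QC}$ by the same amount, confirming that $W_{QC}$ is a valid witness. The exact matching of detected sets then follows from the fact that, since $\cS$ spans $\LL(\cH)$, the measurement set $\cM$ is uniquely recoverable from $P$, so $W_{QC}$ detects the behaviour associated to $\cM$ if and only if $W_{MI}$ detects $\cM$. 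The only mildly delicate point is producing the spanning state basis (which has to consist of legitimate density operators, not arbitrary Hermitian operators), but this is a standard construction and is the main non-computational ingredient of the argument.
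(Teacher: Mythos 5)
Your proposal is correct and follows essentially the same route as the paper: expand each witness operator $F_{by}$ in a spanning set of density matrices, use the resulting coefficients with bound $\beta=\gamma$, and note that any CC-realisable behaviour corresponds to the effective compatible measurements $\widetilde M_{b|y}=\sum_a N_a f(b|y,a)$, so the QC functional reduces to the MI functional evaluated on a compatible set. The only (harmless) differences are cosmetic: you take a full state basis with unique coefficients and invoke unique recoverability of $\cM$ from $P$, whereas the paper only needs a set of states spanning $\{F_{by}\}$ and the identity $\sum_{b,x,y}\mu_{bxy}\tr(\rho_x M_{b|y})=\sum_{b,y}\tr(F_{by}M_{b|y})$ already gives the exact matching of detected measurement sets.
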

\noindent The proof can be found in App. C \cite{supp}.

\textit{Measurement compatibility on a restricted set of states.---}
Theorem \ref{thm:jm} unveils a direct connection between distributed sampling and measurement compatibility,
in which compatibility is equivalent to CC-realisability with an informationally complete set of states.
For more restricted sets of states, 
behaviours that can be CC-realised are directly connected to a relaxed notion of compatibility which we now define.
\begin{definition} \label{def:JM_res}
A set of $o$-outcome measurements $\{M_{b|y}\}_{b,y}$ is \emph{compatible on $\mathcal{S}=\{\rho_x\}_x$} if there exists a mother measurement $\bN=\{N_a\}_a$ and response functions $f(\cdot|y,a):[o]\rightarrow [0,1]$ such that
\begin{equation}
\tr(\rho_x M_{b|y}) = \sum_{a=1}^{n} \tr(\rho_x N_a) f(b|y,a)
\end{equation}
for all $\rho_x\in\cS$, $y\in[m]$, and $b\in[o]$.
\end{definition}
	
Def. \ref{def:JM_res} can be applied to all situations where the experimenter is guaranteed that all states in the experiment lie in a restricted set $\cS$, \textit{e.g.} a qubit experiment where all states involved lie in the $xz$ plane of the Bloch sphere. 
In such cases, the standard definition of joint measurability may represent an overkill, and the relaxed notion may be more suitable. 
See Fig. \ref{fig:depolarising} and App. E \cite{supp} for more on this topic.

\textit{Final discussion.---}
We have formalised the semi-quantum prepare-and-measure scenario and presented a distributed sampling task 
potentially interesting on its own beyond the scope of this work. 
In turn, this leads to non-classical communication certification via quantum communication witnesses. 
The underlying feature that allows this is the non-steering-breaking property of the communication channel, a notion strictly stronger than non-entanglement-breaking \cite{horodecki2003}.
Hence, our framework cannot certify steering-breaking channels that are non-entanglement-breaking (and, therefore, still non-classical).
On the other hand, both properties usually require entanglement for its certification, while our framework uses single systems only (see Ref. \cite{rosset2018} for non-entanglement-breaking channel certification in a scenario with two quantum inputs).


Measurement incompatibility is known to have operational interpretations in terms of EPR-steering \cite{quintino2014, uola2014} and state discrimination games with post-measurement information \cite{carmeli2018, uola2018, skrzypczyk2019}.  
Our findings also provide a novel communication task that captures precisely the essence of this property, complementing the operational interpretations from previous results (although being essentially different from them; see Appendix D \cite{supp} for a discussion on state discrimination).

Notice that if two parties share an entangled quantum state and classical communication they can simulate a quantum channel via the teleportation protocol.
Also, teleportation can be regarded as a protocol where Alice has a quantum input and Bob can do tomography on his final state.
Since in our scenario Bob's measurements are untrusted, our results point towards a realisation of quantum teleportation as device-independent as possible, where only Alice's input state is trusted.

Further open problems for future research are the characterisation of the inner structure of the SQPM polytope, and a quantitative study of the amount of classical/quantum communication required for approximate or probabilistic distributed sampling.
For the classical case, it would be interesting to see how the results of Ref. \cite{montanaro2019} relate to measurement incompatibility. 
Finally, the quantum-communication witnesses developed here are experimentally relevant and implementable with current technology.

\emph{Acknowledgements ---} The authors thank Ranieri V. Nery and Gl\'aucia Murta for fruitful discussions.
LA thanks the ICTP-SAIFR and the IFT-UNESP at São Paulo -- where this work was started and partially developed -- for the hospitality.
LG is supported by S\~ao Paulo Research Foundation (FAPESP) under grants 2016/01343-7 and 2018/04208-9.
MTQ acknowledges the Q-LEAP project of the MEXT, Japan. 
LA is financially supported by the Brazilian agencies CNPq (PQ grant No. 311416/2015-2 and INCT-IQ), FAPERJ (JCN E-26/202.701/2018), CAPES (PROCAD2013), FAPESP, and the Serrapilheira Institute (grant number Serra-1709-17173).


\section{Appendix}

All code is available at \url{www.github.com/guerinileonardo/DS_QCwit_MI}.

\section{A. Quantum realisations}

Let $\cS\in\mathbb{S}(\cH)$ be a set of quantum states and $m,o$ be positive integers.
Not all semi-quantum prepare-and-measure behaviours $\{P(b|\rho_x,y);\rho_x\in\cS, b\in[o], y\in[m]\}$ admit a quantum realisation, \textit{i.e.} can be written as $P(b|\rho_x,y)=\tr(\rho_xM_{b|y})$, for some measurements $\{M_{b|y}\}_{b,y}\subset\MM(\cH)$.
Indeed, consider $\cH=\mathbb{C}^2, \cS=\{\ketbra00, \ketbra11, \II/2\}, o=2, m=1$ and the behaviour specified by
\begin{subequations}\label{pqbehaviour}
\begin{align}
P(1|\ketbra00,1) &= P(1|\ketbra11, 1) = 0,\\
P(1|\frac{\II}{2},1) &= 1,
\end{align}
\end{subequations}
together with the normalisation constraints.
If this behaviour is quantum samplable, then there exists a quantum measurement $\bM_1=\{M_{1|1},M_{2|1}\}$ satisfying
\begin{subequations}
\begin{align}
0 =& P(1|\ketbra00,1) + P(1|\ketbra11,1) \\
=& \tr(\ketbra00 M_{1|1}) + \tr(\ketbra11 M_{1|1}) \\
=& 2\tr(\frac{\II}{2}M_{1|1}) \\
=& 2P(1|\frac{\II}{2},1) = 2,
\end{align}
\end{subequations}
an obvious contradiction.

More generally, we can decide whether a behaviour $\{P(b|\rho_x,y);\rho_x\in\cS,b\in[o],y\in[m]\}$ admits a quantum realisation with the semidefinite programme
\begin{align}\nonumber
\text{given}& \quad \{P(b|\rho_x,y)\}_{b,x,y}\\ \label{sdp:qr}
\min_{\{q(\cdot|\rho_x,y)\},\{\bM_y\}}& \quad \eta \\ \nonumber
\text{s.t.}& \quad (1-\eta)P(b|\rho_x,y) + q(b|\rho_x,y) = \\ \nonumber
&\quad\quad \tr(\rho_x M_{b|y}),\ \forall \rho_x,b,y \\ \nonumber
&\quad q(b|\rho_x,y)\geq 0,\ \forall \rho_x,b,y\\ \nonumber
&\quad \sum_b q(b|\rho_x,y) = \eta, \ \forall \rho_x,y  \\ \nonumber
&\quad M_{b|y}\geq 0,\ \forall b,y\\ \nonumber
&\quad \sum_b M_{b|y} = \II, \ \forall y.
\end{align}
The probabilities $\{q(\cdot|\rho_x,y)/\eta\}_{x,y}$ represent an arbitrary noise, which is mixed to the given behaviour until it accepts a quantum description $\{\tr(\rho_xM_{b|y})\}_{b,x,y}$.
Hence, the input-behaviour admits a quantum realisation if and only if the optimal value obtained is $\eta^*\leq0$.

For a fixed triple $(\cS, o, m)$, it follows from the convexity of the set of quantum measurements that the set of quantum behaviours is convex, besides compact. 
Hence, due to the Separating Hyperplane Theorem \cite{boyd2004} we have that such a set can be characterised by post-quantum behaviour witnesses.

\begin{definition}
A \emph{post-quantum behaviour witness} is a pair $W_{PQ}=(\{\lambda_{bxy}\},\alpha)$ formed by real coefficients $\lambda_{bxy}$ and a bound $\alpha$ such that 
\begin{equation}\label{wit:postq}
\sum_{b,x,y}\lambda_{bxy}P(b|\rho_x,y) {\geq} \alpha,
\end{equation}
is satisfied for all quantum behaviours $\{P(b|\rho_x,y)\}$, but violated by some behaviour of the scenario.
\end{definition}
\noindent Hence, even if a given behaviour $\{P(b|x,y)\}_{b,x,y}$ may arise from some quantum experiment, a violation of (\ref{wit:postq}) ensures that such experiment does not involve states $\{\rho_x\}_{x}$, calculated from the dual formulation of SDP (\ref{sdp:qr}).

We will now show that post-quantum behaviour witnesses can be associated to the generalised robustness \cite{eisert2007, cavalcanti2016} of the property of the inputs under study (in our case, it is the 'post-quantumness' of the behaviour), since we optimise over all possible noises.
Let us now fix an arbitrary noise, that is, consider $\{q(\cdot|\rho_x,y)/\eta\}_y$ not as variables but as inputs selected previously.
Such variation of SDP (\ref{sdp:qr}) admits the Lagrangian
\begin{subequations}
\begin{align}
 \mathcal{L}(\eta,\{M_{b|y}\}) &= \eta\left(1+\sum_{b,x,y}\lambda_{bxy}\left[P(b|\rho_x,y)-q(b|\rho_x,y)\right]\right) \\ &- \sum_{b,y}\tr\left(M_{b|y}\left[U_{by}+A_y -\sum_{x}\lambda_{bxy}\rho_x\right]\right)\\ &- \sum_{b,x,y}\lambda_{bxy}q(b|\rho_x,y) + \sum_y \tr(A_y),
 \end{align}
 \end{subequations}
for arbitrary scalars $\lambda_{bxy}$ and operators $U_{by}\geq0, A_y$.
{Each choice of these parameters yields an upper bound for the optimal value $t^*$ of SDP (\ref{sdp:qr}).}
Minimising over them, considering feasibility constraints and eliminating the slack variables $U_{by}$, we obtain the dual formulation\footnote{By fixing the noise we simplify SDP (\ref{sdp:qr}); leaving the noise unspecified would lead to the same dual with an extra constraint, concerning the non-negativity of variables $q(\cdot|\rho_x,y)$.} of SDP (\ref{sdp:qr}),
\begin{subequations}\label{sdp:qrwit}
\begin{align}\nonumber
\text{given}& \quad \{P(b|\rho_x,y)\}_{b,x,y}, \{q(b|\rho_x,y)\}_{b,x,y}\\ 
\max_{\{\lambda_{bxy}\},\{A_y\}}& \quad \sum_y\tr(A_y) - \sum_{bxy}\lambda_{bxy}P(b|\rho_x,y)  \\ \label{conditionqr}
\text{s.t.}& \quad \sum_x\lambda_{bxy}\rho_x \geq A_y,\ \forall b,y\\ \label{conditionqr2}
& \quad \sum_{bxy}\lambda_{bxy}\left(q(b|\rho_x,y) - P(b|\rho_x,y)\right) = 1.
\end{align}
\end{subequations}

\setcounter{theorem}{3}
Consider the following result.
\begin{theorem}\label{thm:qr}
Let $\cS\subset\mathbb{S}(\cH)$ be a finite set of states and $o,m$ be positive integers.
Then every behaviour $\{P(b|\rho_x,y);\ \rho_x\in\cS,\ b\in[o], y\in[m]\}$ that is not quantum violates some post-quantum behaviour witness.
\end{theorem}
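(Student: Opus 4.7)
The plan is to deduce the theorem from the convex-geometric structure of $\textbf{Q}$, most efficiently by reading the witness off the SDP duality that has just been set up in (\ref{sdp:qr})-(\ref{sdp:qrwit}).

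First I note that $\textbf{Q}$ is a convex and compact subset of the finite-dimensional real vector space of behaviours indexed by $(b,x,y)$: convexity follows from convexity of the set of quantum measurements, and compactness from $M_{b|y}\ge 0$ together with $\sum_b M_{b|y}=\II$, which bound each effect in $\LL(\cH)$. Since $P\notin\textbf{Q}$, the finite-dimensional Hahn-Banach (Separating Hyperplane) Theorem alone already yields, abstractly, real coefficients $\{\lambda_{bxy}\}$ and a scalar $\alpha$ with $\sum_{b,x,y}\lambda_{bxy}Q(b|\rho_x,y)\ge\alpha$ for every $Q\in\textbf{Q}$ and $\sum_{b,x,y}\lambda_{bxy}P(b|\rho_x,y)<\alpha$, which is precisely a post-quantum behaviour witness violated by $P$.

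For a constructive and quantitative argument, however, I would extract the witness directly from the dual SDP (\ref{sdp:qrwit}). Because $P$ is non-quantum, the primal optimum of (\ref{sdp:qr}) satisfies $\eta^{*}>0$. Slater's condition is readily verified (for instance, $M_{b|y}=\II/o$ together with $q(b|\rho_x,y):=\eta P(b|\rho_x,y)$ for $\eta$ sufficiently large gives a strictly feasible interior point), so strong duality holds and the dual is attained at some $(\{\lambda_{bxy}^{*}\},\{A_y^{*}\})$ with the same value $\eta^{*}$. Setting $\lambda_{bxy}:=\lambda_{bxy}^{*}$ and $\alpha:=\sum_y\tr(A_y^{*})$, the dual-feasibility constraint (\ref{conditionqr}) combined with $\sum_b M_{b|y}=\II$ yields, for every quantum behaviour $Q(b|\rho_x,y)=\tr(\rho_x M_{b|y})$,
\begin{equation*}
\sum_{b,x,y}\lambda_{bxy}Q(b|\rho_x,y)=\sum_{b,y}\tr\!\left(M_{b|y}\sum_x\lambda_{bxy}\rho_x\right)\ge \sum_{b,y}\tr(M_{b|y}A_y^{*})=\sum_y\tr(A_y^{*})=\alpha,
\end{equation*}
while by strong duality the dual objective $\alpha-\sum_{b,x,y}\lambda_{bxy}P(b|\rho_x,y)=\eta^{*}>0$ certifies $\sum_{b,x,y}\lambda_{bxy}P(b|\rho_x,y)<\alpha$, \emph{i.e.} a strict violation of the witness by $P$.

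The main obstacle I anticipate is the careful verification of Slater's condition so that strong duality is genuinely available; beyond that point the argument is essentially bookkeeping, translating dual feasibility into the witness inequality via the completeness relation $\sum_b M_{b|y}=\II$. In the purely geometric route, the only subtlety is ensuring that $\textbf{Q}$ is closed as well as convex, which the compactness observation above disposes of.
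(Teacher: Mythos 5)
Your proposal follows essentially the same two routes as the paper: the separating-hyperplane argument (which the paper itself notes already suffices, given that $\textbf{Q}$ is convex and compact) and the constructive argument via strong duality of SDPs (\ref{sdp:qr}) and (\ref{sdp:qrwit}), including the correct translation of the dual-feasibility constraint (\ref{conditionqr}) into the witness inequality through $M_{b|y}\geq 0$ and $\sum_b M_{b|y}=\II$. The one slip is your explicit Slater point: with $M_{b|y}=\II/o$ and $q(b|\rho_x,y)=\eta P(b|\rho_x,y)$, the equality constraint $(1-\eta)P(b|\rho_x,y)+q(b|\rho_x,y)=\tr(\rho_x M_{b|y})=1/o$ would force $P(b|\rho_x,y)=1/o$ for all $(b,x,y)$, so this point is infeasible for a generic behaviour. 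The repair is immediate and is the paper's choice: take $\eta=1$ (or $\eta$ close enough to $1$) and $q(b|\rho_x,y)=\tr(\rho_x M_{b|y})-(1-\eta)P(b|\rho_x,y)=1/o-(1-\eta)P(b|\rho_x,y)>0$, which is strictly feasible; with that correction your argument coincides with the paper's proof.
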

\noindent The Separating Hyperplane Theorem suffices to prove Thm. \ref{thm:qr}.
Nonetheless, we will now present a second demonstration, more constructive, that is based on SDPs (\ref{sdp:qr}) and (\ref{sdp:qrwit}).
Duality theory says that the optimal solution for the dual problem is always an upper bound for the optimal solution of its primal.
Sometimes, however, we may have strong duality between them, meaning that the problems are such that both optimal values coincide.
A sufficient requirement for ensuring strong duality is called Slater's condition, which is satisfied whenever there is a feasible point satisfying all equality constraints and strictly satisfying the inequalities ones, for either one of the problems.
Our second proof of Thm. \ref{thm:qr} is based on proving that two SDPs above satisfy Slater's condition and are therefore strongly dual.

\begin{proof}
The behaviour $\{P(b|\rho_x,y)\}$ admits a quantum realisation if and only if SDP (\ref{sdp:qr}) yields an optimal solution $\eta^*\geq1$.
This SDP is strictly feasible: for $\eta=1$ we obtain a solution by placing $q(b|\rho_x,y)=\tr(\rho_xM_{b|y})$, for any arbitrary measurements $\{M_{b|y}\}_{b,y}$.
In particular, choosing $M_{b|y} = \II/o$ for all $b,y$ we have that $M_{b|y}$ are strictly positive operators, while $q(b|\rho_x,y)$ are strictly positive scalars.
Hence, by Slater's condition \cite{boyd2004} we have that (\ref{sdp:qr}) and (\ref{sdp:qrwit}) present strong duality.
This implies that $\{P(b|\rho_x,y)\}$ admits a quantum realisation if and only if the dual SDP (\ref{sdp:qrwit}) yields an optimal solution less or equal to zero, or equivalently
\begin{equation}
\sum_{bxy}\lambda_{bxy}^*P(b|\rho_x,y) \geq \sum_y\tr(A_y^*),
\end{equation}
where $\lambda_{bxy}^*,A_Y^*$ are provided by the optimal solution of (\ref{sdp:qrwit}).
Taking $\alpha=\sum_y\tr(A_y^*)$ concludes the proof.
\end{proof}

 
As an example, inputting the behaviour defined in Eqs. (\ref{pqbehaviour}) in SDP (\ref{sdp:qrwit}) we obtain the post-quantum realisation witness
\begin{equation}
-\frac{1}{2}\left[P(2|\rho_1,1)+P(2|\rho_2,1)\right]-P(1|\rho_3,1) \geq -1.
\end{equation}
The left-hand side of the above witness equals -2 when evaluated on behaviour (\ref{pqbehaviour}), attesting its "post-quantumnes".

\section{B. Distributed sampling with classical communication}\label{app:sdp}

Recall that a behaviour $\{P(b|\rho_x,y)\}_{b,x,y}$ admits a distributed sampling realisation with classical communication (is CC-realisable) if there exists a quantum measurement $\bN\in\MM(\cH)$ and response functions $\{f(\cdot|y,a)\}_{y,a}$ such that 
\begin{equation}
P(b|\rho_x,y)=\sum_{a=1}^{n} \tr (\rho_x N_a) f(b|y,a)
\end{equation}
for any $b\in[o],\rho_x\in\cS,y\in[m]$.
Notice that every such behaviour is quantum realisable, associated to the measurements $M_{b|y} = \sum_aN_af(b|y,a)$.
Hence, admitting a quantum realisation is a necessary condition for a behaviour to be CC-realisable.

The subset of behaviours that can be distributedly sampled with classical communication form the convex and compact set \textbf{CC}.
Consequently, Thm. 1 can also be seen as an application of the Separating Hyperplane Theorem \cite{boyd2004}.
In what follows, we provide an SDP approach to the problem and an alternative proof of Thm. 1.

Notice that we can always take Bob's response function to be deterministic by mapping its local randomness to the measurement $\bN$.
In other words, if such a strategy is possible, then it can be done with $\bN$ having at most $n=(o)^m$ outcomes, given by $\textbf{a}=a_1\ldots a_{m}$, with $a_i\in[o]$.
These outcomes already encode Bob's answer, who simply outputs the $y$-th symbol of the classical message $\textbf{a}$, which corresponds to applying the map $p(b|y,\textbf{a}) = \delta_{a_y,b}$.

Hence we can decide whether $\{P(b|\rho_x,y)\}_{b,x,y}\in\textbf{CC}$ via the generalised robustness SDP
\begin{align}\nonumber
\text{given}& \quad \{P(b|\rho_x,y)\}_{b,x,y}\\ \label{sdp:cc}
\min_{\bN, \{\widetilde M_{b|y}\}}& \quad \eta \\ \nonumber
\text{s.t.}& \quad (1-\eta)P(b|\rho_x,y) + \tr(\rho_x\widetilde M_{b|y}) = \\ \nonumber
&\quad\quad \sum_{\textbf{a}} \tr(\rho_x N_\textbf{a})\delta_{a_y,b},\ \forall x,b,y \\ \nonumber
&\quad \widetilde M_{b|y}\geq 0,\ \forall b,y\\ \nonumber
&\quad \sum_b \widetilde M_{b|y} = \eta\II, \ \forall y  \\ \nonumber
&\quad N_\textbf{a} \geq 0,\ \forall \textbf{a}\in\{1,\ldots,o\}^{\times|\cM|}\\ \nonumber
&\quad \sum_\textbf{a} N_\textbf{a} = \II,
\end{align}
where $\{\widetilde M_{b|y}/\eta\}_{b,y}$ is an arbitrary measurement that provides the quantum noise $\{\tr(\rho_x\widetilde{M}_{b|y})/\eta\}_{b,x,y}$. 
Running SDP (\ref{sdp:cc}), if the optimal value obtained is $\eta^*>0$, then some amount of noise is needed and we know that $\{P(b|\rho_x,y)\}$ requires quantum communication to be distributedly sampled.

Let us now fix an arbitrary quantum noise $\{q(b|\rho_x,y) = \tr(\rho_x\widetilde{M}_{b|y})\}_{b,x,y}$ (see the footnote in App. A).
The obtained SDP admits the Lagrangian
\begin{subequations}
\begin{align}
 \mathcal{L}(\eta,\{N_{a}\}) =& \eta\left(1+\sum_{b,x,y}\mu_{bxy}\left[P(b|\rho_x,y)-q(b|\rho_x,y)\right]\right) \\ &+ \sum_{a}\tr\left(N_a\left[V_{a}+B -\sum_{b,x,y}\mu_{bxy}\rho_x\delta_{a_y,b}\right]\right)\\ &\tr(B) - \sum_{b,x,y}\mu_{bxy}q(b|x,y),
 \end{align}
 \end{subequations}
for arbitrary scalars $\lambda_{bxy}$ and operators $V_{a}\geq0, B$.
Minimising over these parameters, considering feasibility constraints and eliminating the slack variables $V_{a}$, we obtain the dual formulation
\begin{subequations}\label{sdp:qcwit}
\begin{align}
\min_{\{\mu_{bxy}\},B}&\quad \tr(B) - \sum_{b,x,y}\mu_{bxy}\tr(\rho_xM_{b|y})\\ \label{witness_condition}
\text{s.t.}&\quad \sum_{b,x,y}\mu_{bxy}\rho_x\delta_{a_y,b} \geq B,\ \forall \textbf{a}\in[o]^{\times m}\\
& \quad \sum_{bxy}\mu_{bxy}[q(b|\rho_x,y) - P(b|\rho_x,y)] = 1.
\end{align}
\end{subequations}

We now show that these two problems indeed display strong duality.
This implies that every behaviour whose distributed sampling cannot be implement only with classical communication violates a quantum communication witness, and proves Thm. 1.

\begin{proof}
Since behaviours that are not quantum realisable cannot be CC-realisable, we can restrict our proof to quantum realisable behaviours.

Let $\{P(b|\rho_x,y) = \tr(\rho_xM_{b|y})\}_{b,x,y}$ be a quantum realisable behaviour.
Let $D_t: A \mapsto (1-\eta)A+\eta\tr(A)\II/d$ be the depolarising map, with robustness $\eta\in[0,1]$.
Given the above behaviour, we obtain a strictly feasible solution for SDP (\ref{sdp:cc}) by setting $q(b|x,y)=\tr(M_{b|y})/d$ as white noise and $1>\eta>0$ close enough to 1 such that the depolarised measurements $\cM_\eta=\{D_\eta(M_{b|y})\}$ are jointly measurable.
From the connection between joint measurability and Einstein-Podolsky-Rosen steering \cite{quintino2014, uola2014}, we know this happens for $\eta$ strictly less than 1 \cite{quintino2015}.
If the corresponding mother measurement $\bN=\{N_a\}_a$ possess a zero eigenvalue, then the set $\cM_{t'}$, with $1>\eta'>\eta$, admits a mother $\bN'= \{\Phi_{\eta'/\eta}(N_a)\}$, with strictly positive elements.
Hence, by Slater's condition \cite{boyd2004} the SDPs (\ref{sdp:cc}) and (\ref{sdp:qcwit}) present strong duality.

Therefore, if the behaviour can be distributedly sampled with classical communication, the primal SDP yields $\eta^*\leq0$.
Strong duality ensures that the optimal solution for the dual matches $\eta^*$, being also $\leq0$, and thus we have
\begin{equation}\label{witness}
\sum_{b,x,y}\mu_{bxy}\tr(\rho_xM_{b|y}) \geq \tr(B),
\end{equation}
for some real coefficients $\{\mu_{bxy}\}_{b,x,y}$ and some matrix $B$ acting on $\cH$ satisfying Eq. (\ref{witness_condition}).
Moreover, $R_{NCC} = \eta^* = \max_{W_{QC}}\beta^*-\sum_{b,x,y}\mu_{bxy}^*\tr(\rho_xM_{b|y})$, where this is the maximum violation of some witness for the given behaviour.
\end{proof}

As a concrete example, consider in dimension $d=2$ the distributed sampling of the statistics generated by the set of states $\hat\cS = \{\ketbra++,\ketbra yy, \ketbra00, \II/2\}$, formed by one eigenstate of each Pauli matrix $X, Y$ and $Z$ and the maximally mixed state, and the set of measurements $\hat\cM=\{\bM_x,\bM_y\}$ associated to $X$ and $Y$.
Running SDP (\ref{sdp:qcwit}) with $q$ set as white noise, we obtain the quantum communication witness $W_{QC}$ given by
\begin{align*}
&\frac{-1}{2}\left[P(1|\rho_1,1) + P(1|\rho_3, 2)\right] \\ &+ \frac12\left[P(2|\rho_1,1) + P(2| \rho_3, 2)\right]\\ 
+ &\frac{1}{2\sqrt2}\left[P(1|\rho_2,1) + P(2|\rho_2,1)\right]\\& + P(1|\rho_4,2) - P(2|\rho_4,1) \geq \frac{-1}{2\sqrt2}.
\end{align*}
The behaviour yielded by $\hat\cS$ and $\hat\cM$ violates it up to $1/(2\sqrt2)-1$.

Notice that $\cS$ spans the set of Hermitian operators acting in $\CC^2$, therefore every incompatible set of measurements is detected by some $W_{QC}$ constructed from $\cS$.

\section{C. Proofs of Theorems 2 and 3}

Here we restate and prove theorems from the main text. 

\setcounter{theorem}{1}
\begin{theorem}
A set of measurements $\cM\subset\MM(\cH)$ is compatible if and only if the behaviour $\{\tr (\rho_x M_{b|y});\ \rho_x\in\cS,\ \bM_y\in\cM\}$ admits a distributed sampling realisation with classical communication, for any set of states $\cS\subset\mathbb{S}(\cH)$ that spans $\LL(\cH)$.
Morevover, for any distributedly sampled quantum behaviour $P=\{P(b|\rho_x, y)=\tr(\widetilde{\Lambda}(\rho_x)\widetilde{M}_{b|y})\}\in\textbf{Q}$, we have 
\begin{equation}\label{bounds}
R_{NCC}(P) \leq R_{I}(\widetilde{\cM})\ \text{and}\ R_{NCC}(P) \leq R_{NSB}(\widetilde{\Lambda}),
\end{equation} 
where $\widetilde{\cM}$ and $\widetilde{\Lambda}$ are the uncharacterised measurements and communication channel, respectively, used in the sampling of $P$.
The equality holds in the first case if $\cS$ spans $\LL(\cH)$ and in the second case if, besides that, the measurements $\{\widetilde{\Lambda}^\dagger(\widetilde{M}_{b|y})\}_{b,y}$ present the greatest generalised robustness of incompatibility in its dimension.
\end{theorem}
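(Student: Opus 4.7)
The plan is to split the argument into the iff-statement first, then the two robustness inequalities, and finally the equality conditions. For the forward implication of the iff-part, if $\cM$ is compatible with mother measurement $\bN=\{N_a\}$ and response $f(b|y,a)$, then $M_{b|y}=\sum_a N_a f(b|y,a)$, and tracing against any $\rho_x$ immediately yields $\tr(\rho_x M_{b|y})=\sum_a \tr(\rho_x N_a)f(b|y,a)$, a CC-realisation valid for arbitrary $\cS$. For the converse, fix $\cS$ spanning $\LL(\cH)$ and assume a CC-realisation with $\tr(\rho_x M_{b|y})=\sum_a \tr(\rho_x N_a)f(b|y,a)$ for all $\rho_x\in\cS$. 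Rewriting as $\tr(\rho_x \sum_a N_a f(b|y,a))=\tr(\rho_x M_{b|y})$ and using that $\cS$ spans $\LL(\cH)$, the operator identity $M_{b|y}=\sum_a N_a f(b|y,a)$ follows, exhibiting compatibility of $\cM$.

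For the first inequality, let $\eta=R_I(\widetilde{\cM})$ with an optimal witness $(1-\eta)\widetilde{M}_{b|y}+\eta Q_{b|y}=\sum_a N_a f(b|y,a)$. Apply the adjoint $\widetilde{\Lambda}^\dagger$ to both sides and trace against $\rho_x$ to obtain
\begin{equation}
(1-\eta)P(b|\rho_x,y)+\eta\, q(b|\rho_x,y)=\sum_a \tr(\rho_x\widetilde{\Lambda}^\dagger(N_a))\,f(b|y,a),
\end{equation}
where $q(b|\rho_x,y)=\tr(\widetilde{\Lambda}(\rho_x)Q_{b|y})\in\textbf{Q}$ and the right-hand side is a genuine CC-realisation with mother $\widetilde{\Lambda}^\dagger(\bN)$, so $R_{NCC}(P)\leq\eta$. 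For the second inequality, let $\eta=R_{NSB}(\widetilde{\Lambda})$ with decomposition $(1-\eta)\widetilde{\Lambda}+\eta\Gamma=\Lambda'$ whose adjoint is incompatibility-breaking; then $\{\Lambda'^\dagger(\widetilde{M}_{b|y})\}$ is compatible, and repeating the manipulation (with the noise now originating from $\Gamma$ acting on $\rho_x$) produces a CC-realisation of $(1-\eta)P+\eta q$, yielding $R_{NCC}(P)\leq\eta$.

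For the equalities, when $\cS$ spans $\LL(\cH)$, the converse direction of the iff-part can be re-run on the noisy behaviour that achieves $R_{NCC}(P)$: the extracted CC-realisation becomes, via spanning, an operator-level compatibility decomposition of the effective measurements, so no further mixing is possible and the bound is attained. For the channel inequality, an analogous inversion additionally requires that the measurements in play saturate the worst-case maximisation implicit in the definition of $R_{NSB}$, since a non-maximally-incompatible choice leaves slack between the amount of noise needed to break those specific measurements and the amount needed to break every measurement set. The main technical obstacle is this last point: the non-steering-breaking robustness is a worst-case quantity over $\MM(\cH)$, so to realise $R_{NCC}(P)=R_{NSB}(\widetilde{\Lambda})$ one must invoke existence (guaranteed by compactness of $\MM(\cH)$) of measurements whose generalised incompatibility robustness equals the maximum achievable in the given dimension, and use these as $\widetilde{\Lambda}^\dagger(\widetilde{M}_{b|y})$.
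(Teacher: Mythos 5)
Your proof follows essentially the same route as the paper's: tracing the mother-measurement decomposition against $\rho_x$ for the forward direction, using the spanning property of $\cS$ to lift the CC-realisation to the operator identity $M_{b|y}=\sum_a N_a f(b|y,a)$ for the converse, pushing the optimal compatible (resp. steering-breaking) noisy decomposition through $\widetilde{\Lambda}^\dagger$ to get the two robustness inequalities, and re-running the spanning argument together with the maximal-incompatibility condition for the equality cases. If anything, your treatment of the first inequality is more explicit than the paper's one-line ``it follows'', and your discussion of the equalities sits at the same level of detail as the paper's own proof, so the approaches coincide.
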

\begin{proof}
Let $\cS\in\cS(\cH)$ be an arbitrary set of states and $\rho_x\in\cS$.
Assuming $\cM$ to be jointly measurable, Alice can perform the mother measurement $\bN$ on $\rho_x$ and send the obtained outcome $a$ to Bob, who applies $f(\cdot|y,a)$, one of the post-processing distributions that accompanies $\bN$.
The statistics generated are described by 
\begin{equation}
P(b|\rho_x,y)=\sum_{a=1}^{n} \tr (\rho_x N_a) f(b|y,a),
\end{equation}
for any $b\in[o],\rho_x\in\cS,y\in[m]$, where the equality is guaranteed, independently of $\rho_x$, by the joint measurability hypothesis
\begin{equation}\label{eq:jm}
M_{b|y} = \sum_{a=1}^{n} N_a f(b|y,a),
\end{equation}
for all $y\in[m]$ and $b\in[o]$.
It follows that $R_{NCC}(P) \leq R_{I}(\cM)$.

On the other hand, suppose that exists $\bN$ and $\{p(\cdot|y,a)\}_{y,a}$ such that $\sum_a \tr(N_a\rho)p(b|y,a)=\tr(\rho_x M_{b|y})$ for all $(\rho_x,\bM_y)\in\cS\times\cM$.
If $\cS$ spans $\LL(\cH)$, this implies the relation between measurements operators given in Eq. (\ref{eq:jm}).
Hence $\cM$ is jointly measurable, admitting $\bN$ as a mother measurement and $\{f(\cdot|y,a)\}_{y,a}$ as post-processing maps.
Thus, $R_{NCC}(P) \leq R_{I}(\cM)$ can be saturated for informationally complete sets $\cS$.

Finally, we can write $P(b|\rho_x,y) = \tr(\widetilde{\Lambda}(\rho_x)\widetilde{M}_{b|y}) = \tr(\rho_x\widetilde{\Lambda}^\dagger(\widetilde{M}_{b|y}))$ to describe the effective communication channel $\widetilde{\Lambda}$ (which comprehends Alice's preparation on $\rho_x$ and is a quantum-classical channel in the classical communication case) and the effective measurements $\{\widetilde{M}_{b|y}\}$ that generate the behaviour.
For large enough $\eta$, there exists some noise-channel $\Gamma$ such that $(1-\eta)\Lambda+\eta\Gamma$ is steering-breaking.
At his point, $\{[(1-\eta)\Lambda+\eta\Gamma]^\dagger(\widetilde{M}_{b|y})\}$ is compatible and its corresponding statistics are CC-realisable, by the first part of the theorem.
Therefore, $R_{NSB}(\widetilde{\Lambda})\geq R_{NCC}(P)$, which is saturated if $\{\rho_x\}_x$ spans $\LL(\cH)$ -- thus detecting the standard incompatibility of the underlying measurements $\widetilde{\Lambda}(\widetilde{\cM})$ -- and if those are the most incompatible acting in $\cH$ -- thus implying that the noisy channel would break the incompatibility of any other set of measurements as well.
\end{proof}

\begin{theorem}
For any measurement incompatibility witness $W_{MI}=(\{F_{by}\}, \gamma)$ there exists a set of states $\{\rho_x\}_{x}$ and a quantum communication witness $W_{QC}=(\{\mu_{bxy}\},\beta)$ in the corresponding distributed sampling scenario that detects the incompatibility of the same sets of measurements as $W_{MI}$.
\end{theorem}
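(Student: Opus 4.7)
The plan is to exploit the duality between Hermitian operators acting on $\cH$ and linear functionals of behaviours when the input states form an informationally complete set. Specifically, I would first choose $\cS=\{\rho_x\}_x$ to be any finite set of states that spans $\LL(\cH)$. Such sets always exist (one can take, e.g., a collection of $d^2$ states that form a basis for $\LL(\cH)$).

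Next, because $\{\rho_x\}_x$ spans $\LL(\cH)$ and each $F_{by}$ is Hermitian, I would write each witness operator in the form
\begin{equation}
F_{by}=\sum_{x}\mu_{bxy}\,\rho_x,
\end{equation}
for some real coefficients $\mu_{bxy}$. These coefficients define the candidate quantum communication witness via $W_{QC}=(\{\mu_{bxy}\},\beta)$ with $\beta=\gamma$. The key observation is that, for any set of measurements $\cM=\{M_{b|y}\}_{b,y}$ that generates a quantum behaviour $P(b|\rho_x,y)=\tr(\rho_xM_{b|y})$,
\begin{equation}
\sum_{b,x,y}\mu_{bxy}P(b|\rho_x,y)=\sum_{b,y}\tr\!\left(\Big(\sum_x\mu_{bxy}\rho_x\Big)M_{b|y}\right)=\sum_{b,y}\tr(F_{by}M_{b|y}).
\end{equation}

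With this identity in hand, I would verify the two required properties of $W_{QC}$. To show that $W_{QC}$ is satisfied by all CC-realisable behaviours, I invoke Theorem~2: a behaviour $P$ is CC-realisable from an informationally complete set $\cS$ if and only if it comes from a compatible set of measurements $\cM$. For any such $\cM$, the $W_{MI}$ inequality yields $\sum_{b,y}\tr(F_{by}M_{b|y})\ge\gamma$, and by the identity above this is exactly $\sum_{b,x,y}\mu_{bxy}P(b|\rho_x,y)\ge\beta$. Conversely, if an incompatible set $\cM$ violates $W_{MI}$, the corresponding quantum behaviour violates $W_{QC}$ by precisely the same amount, so $W_{QC}$ detects incompatibility exactly whenever $W_{MI}$ does. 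This establishes the one-to-one correspondence of detected sets claimed in the theorem.

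The main obstacle I anticipate is subtle rather than technical: one must be careful that the logical passage from ``CC-realisable behaviour'' to ``compatible measurements'' uses the spanning hypothesis on $\cS$, which is why choosing an informationally complete set of states is essential in the construction. Once this is noted, the rest is a direct translation between the operator-level witness and the behaviour-level witness via the linear expansion of $F_{by}$ in the basis $\{\rho_x\}_x$. No optimisation or SDP duality is required for this direction; it is a purely algebraic consequence of Theorem~2 combined with the span condition.
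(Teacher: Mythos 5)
Your proposal is correct and follows essentially the same route as the paper: expand each $F_{by}=\sum_x\mu_{bxy}\rho_x$ in a spanning set of states, take $\beta=\gamma$, and use the identity $\sum_{b,x,y}\mu_{bxy}\tr(\rho_xM_{b|y})=\sum_{b,y}\tr(F_{by}M_{b|y})$ together with the fact that a CC-realisation yields the compatible effective measurements $\widetilde M_{b|y}=\sum_a N_a f(b|y,a)$. The only cosmetic difference is that you insist on an informationally complete $\cS$ and invoke Theorem~2, whereas the paper reads compatibility directly off the CC-realisation and only needs $\cS$ to span the operators $\{F_{by}\}$.
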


\begin{proof}
Let $\{\rho_x\}_{x}$ be a set of states that spans $\{F_{by}\}$, i.e. satisfies
\beq
F_{by} = \sum_x \lambda^{by}_x\rho_x
\eeq
for some real coefficients $\{\lambda_{x}^{by}\}_{b,x,y}$.
Such a set always exists, since any Hermitian operator can be written as the difference between two positive operators, which can be renormalised to be trace-one.
Applying this procedure to a basis of the space of Hermitian operators acting in the underlying Hilbert space provides a set of quantum states that span the set of Hermitian operators.

Consider now the expression
\begin{equation}\label{bla}
\sum_{b,x,y}\lambda_x^{by}\tr(\rho_xM_{b|y}).
\end{equation}
If the behaviour $\{\tr(\rho_xM_{b|y})\}_{b,x,y}$ admits a CC-sampling, there exists a measurement $\bN$ and response functions $f(\cdot|y,a)$ such that we can rewrite (\ref{bla}) as
\begin{subequations}
\begin{align}
\sum_{b,x,y}\lambda_x^{by}\tr(\rho_xM_{b|y}) =& \sum_{b,x,y}\lambda_x^{by}\sum_a\tr(\rho_xN_a)f(b|y,a) \\
=& \sum_{b,y}\tr\left(\sum_x \lambda_x^{by}\rho_x\left[\sum_aN_af(b|y,a)\right]\right)\\
=& \sum_{b,y}\tr(F_{by}\left[\widetilde M_{b|y}\right]),
\end{align}
\end{subequations}
for some set of measurements $\{\widetilde M_{b|y} := \sum_a N_af(b|y,a)\}_{b,y}$, which is compatible by definition.
Since $(\{F_{by}\}, \gamma)$ is a incompatibility witness, the expressions above are lower-bounded by $\gamma$.

By definition, $(\{F_{by}\}, \gamma)$ detects some incompatible set of measurements $\{\hat M_{b|y}\}_{b,y}$, hence $(\{\lambda_{bxy}\}, \gamma)$ detects the quantum communication in the distributed sampling of the behaviour $\{\tr(\rho_x\hat M_{b|y})\}$.
Therefore, $(\{\lambda_{b}^xy\}, \gamma)$ defines a quantum communication witness, which detects the same incompatible measurements (with the trusted states $\rho_x$) as $(\{F_{by}\}, \gamma)$ by construction.
\end{proof}

\section{D. Relation with state discrimination}\label{app:witcomparison}

We start by pointing that if a set of states $\cS=\{\rho_x\}_x$ can be perfectly discriminated, then Alice can identify the label $x$ and send it Bob, who will again hold both inputs.
Therefore, perfect discrimination implies a CC-realisation in our distributed sampling task.
Consequently, a violation of a quantum communication witness detects not only the quantum communication needed for realising the behaviour and the incompatibility of the measurements, but also that $\cS$ is not perfectly discriminable.

It was shown in Thm. 3 that every incompatibility witness corresponds to a quantum communication witness.
As proved in Refs. \cite{carmeli2018, uola2018, skrzypczyk2019}, the former can also be phrased in terms of a discrimination game of suitable ensembles.
Hence, we explicit here how to translate from one formulation to the other.
Notice that the distributed sampling related to some set of states $\{\rho_x\}$ certifies as much incompatibility as the discrimination of a \textit{different} set of states $\{\sigma^y_b\}$, reflecting the fact that the translation is non-trivial and the tasks are intrinsically different.

Recall that our witnesses are given by 
\begin{equation}\label{wit2}
\sum_{b,x,y}\mu_{bxy}\tr(\rho_xM_{b|y}) \geq \tr(B),
\end{equation}
where the sum runs over $b=1,\ldots,o$, $x=1,\ldots,|\cS|$ and $y=1,\ldots,m$.
Applying the reasoning depicted in Thm. 1 of \cite{carmeli2018}, we find that a violation of (\ref{wit2}) provides the advantage the set $\cM=\{M_{b|y}\}_{b,y}$ presents over compatible sets when discriminating states from the subensembles $\mathcal{E}	_1=(\{\sigma^1_{b}\},p^1(b))_b,\ldots,\mathcal{E}_{m}=(\{\sigma^{m}_{b}\},p^{m}(b))_b$ given by
\begin{equation}
p^y(b)\sigma^y_b = \alpha(\sum_x\lambda_{bxy}\rho_x + \nu\II),
\end{equation}
where $\nu = \sum_{b,y}\max_x |\lambda_{bxy}|$ and $\alpha = \left(\sum_{bxy}\lambda_{bxy}+o\cdot m\cdot\nu /d\right)^{-1}$. 

Coming back to the example of the witness $W_{QC}$ in Appendix B, measurements that provide its maximum violation can optimially discriminate the ensembles given by
\begin{align*}
&\sigma^1_1 = \frac{\II-0.0551(X+Y)}{2},& &p^1(1) = 0.2711\\
&\sigma^1_2 = \frac{\II+0.0551(X-Y)}{2},& &p^1(2) = 0.2289\\
&\sigma^2_1 = \frac{\II}{2},& &p^2(1) = 0.2711\\
&\sigma^2_2 = \frac{\II+0.1306\cdot Y}{2},& &p^2(2) = 0.2289. 
\end{align*}

\section{E. Depolarising channels and restricted measurement compatibility}

In the case where Alice and Bob share a perfect quantum channel $\Lambda: \rho \mapsto \rho$, distributed sampling is a trivial task, since Alice can send Bob her input-state $\rho_x$.
In Fig. 3, we considered the case where both players exchange imperfect quantum communication, represented by a depolarising quantum channel given by $D_t: \LL(\CC^2)\rightarrow \LL(\CC^2),\ \rho \mapsto t\rho+(1-t)\II/2$, for a fixed transmittance rate $t=1-\eta\in[0,1]$.
Thus, the parameter $\eta$ marks the amount of white noise acquired in the communication.
Also, the depolarising channel is self-adjoint, meaning that $\tr[D_t(\rho)M_b] = \tr[\rho D_t(M_b)]$, for any state $\rho$ and measurement element $M_b$.

We now calculate the critical parameter $t$ that makes $D_t$ replaceable by a classical channel, in the distributed sampling context.
That will be the threshold for our quantum communication detection, and the white-noise robustness of non-steering-breaking of the identity channel.
Our main tool to answer this question is a variation of SDP (\ref{sdp:cc}) in which we fix the noise to be completely random (white noise), $\widetilde M_{b_y} = \tr(M_{b|y})\II/d$, where $\{M_{b|y}\}$ is the underlying measurement that yielded the behaviour. 
At the obtained critical transmittance rate $t^*=1-\eta^*$, the transmitted information is classical enough to camouflage any quantumness from the channel.
The quantity $\eta^*$ is then defined to be the behaviour's \textit{white-noise robustness of non-CC-realisability}, denoted by $R_{NCC}^wn(P)$.

Let $\cH= \mathbb{C}^2$ and $\cM$ be the set of qubit measurements associated to the Pauli observables $\sigma_x,\sigma_y,\sigma_z$.
For any $\cS=\{\rho_x\}_x$, we denote $D_t(\cS) \equiv \{D_t(\rho_x)\}_x$, and similarly for $D_t(\cM)$.

If $\cS_1=\{\ketbra00, \ketbra++\}$, where
\begin{equation}
\ket + = \frac{\ket0 + \ket1}{\sqrt{2}},
\end{equation}
then by SDP (\ref{sdp:cc}) we see that the behaviour generated by depolarised states $(D_{t_1^*}(\cS_1), \cM)$ is CC-realisable at the critical parameter $t_1^* = 0.9449$.
Similarly, for $\cS_2=\{\ketbra00, \ketbra++, \ketbra rr\}$, where 
\begin{equation}
\ket r = \frac{\ket0 + i\ket1}{\sqrt{2}}
\end{equation}
is an eigenstate of $\sigma_y$, we obtain $t_2^* = 0.8165 \approx \sqrt{2/3}$.
For $\cS_3 = \{\ketbra00, \ketbra++, \ketbra rr, \II/2\}$ we achieve the critical parameter $t_3^* = 0.5774 \approx 1/\sqrt3$. 
We note that the set $\cS_3 $ spans the qubit Hermitian operators space, and the critical value $t_3^* = 1/\sqrt3$  can also be obtained analytically \cite{heinosaari2008}.

From the point of view of measurement incompatibility on a restricted set of states, defined in the main text, we can interpret these $t^*$ as the critical parameters for which the set $\cM$ becomes compatible on each of the sets $\cS_1, \cS_2$, and $\cS_3$.
Since $\cS_3$ spans $\LL(\CC^2)$, we have that $D_{t^*_3}(\cM)$ is compatible in the standard sense.
By Thm. 2, this in turn implies that the behaviour generated by $(D_{t_3^*}(\cS),\cM)$ is CC-realisable for any set of states $\cS$, hence $D_{t^*_3}$ does not offer any advantage over classical channels.

The general qudit depolarising channel $D_t$ is known to be incompatibility-breaking for qudit projective measurements if and only if $t\leq t^{proj}_{d}:=\frac{1}{d-1} \left(-1+\sum_{k=1}^d\frac{1}{k}\right)$ \cite{wiseman2007,quintino2014,uola2014,almeida2007,heinosaari2015}.
Thus the qubit critical transmittance is $1/2$. 
This implies that if $t\leq t^{proj}_{d}$, the set of states $\rho_x$ received by Alice lies in $\LL(\mathbb{C}^d)$, and Bob performs projective measurements, then no quantum communication can be certified. 
On the other hand, if $t>t^{proj}_{d}$ and the set of states $\rho_x$ received by Alice spans $\LL(\mathbb{C}^d)$, then any incompatible set of projective measurements in $\LL(\mathbb{C}^d)$ implemented by Bob certifies quantum communication.

For general POVMs, by noticing that the local hidden variable model of Ref. \cite{almeida2007} can be transformed into a local hidden state model (``steering model'') and using the connection between joint measurability established in Ref. \cite{quintino2014,uola2014}, we can show that the qudit depolarising channel $D_t$ is incompatibility-breaking for all measurements if  $t\leq t^{all}_d:=\frac{(3d-1)(d-1)^{d-1}}  {(d-1)d^d}$. 
Hence when $t\leq t^{all}_d$ and the set of states $\{\rho_x\}_x$ lies in $\LL(\mathbb{C^d})$, all behaviours admit a CC-realisation, regardless of the measurements performed by Bob.

For the case where the number of measurements performed by Bob is a finite $n\in \mathbb{N}$, it follows from Theorem 2 that a channel can be used to certify QC if and only if it is $n$-incompatibility breaking \cite{heinosaari2015}, \textit{i.e.} it breaks the incompatibility of any $n$ measurements. 
For this situation, there exists a systematic numerical method with a sequence of algorithms that converge to the exact critical value for the depolarising channel to be $n$-incompatibility breaking \cite{bavaresco2017}. 
This numerical approach provides upper and lower bounds for the $n$-incompatibility-breaking critical value of $D_t$ in finitely many steps.

We also remark that $D_t$ is entanglement-breaking if and only if $t\leq \frac{1}{d+1}$ \cite{horodecki2003}.
Also, a channel is entanglement-breaking if and only if it admits a measure-and-prepare realisation, that is, it can be described as $\rho \mapsto \sum_b\tr(\rho N_{b})\sigma_b$, for some measurement $\bN$ and states $\{\sigma_b\}_b$. 
This description provides a clear recipe for a CC-realisation of distributed sampling that simulates these channels.

All these calculations can be found in the repository \url{www.github.com/guerinileonardo/DS_QCwit_MI}.

\end{document}